\documentclass[conference]{IEEEtran}
\usepackage{amssymb, amsmath, amsthm}
\usepackage{amsfonts,mathtools,mathrsfs,mathtools,color}
\usepackage[colorlinks]{hyperref}

\hyphenation{op-tical net-works semi-conduc-tor}

\newtheorem{theorem}{Theorem}
\newtheorem{lemma}{Lemma}

\newcommand{\R}{\mathbb{R}} 
\newcommand{\part}[2]{\frac{\partial #1}{\partial #2}} 

\newcommand{\KL}{K}
 
\newcommand{\E}{\mathbb{E}}
\newcommand{\Tr}{\mathrm{Tr}}
\newcommand{\snr}{\textnormal{\textsf{snr}}}
\newcommand{\mmse}{\textnormal{\textsf{mmse}}}
\begin{document}

\title{Information and estimation \\in Fokker-Planck channels}

\author{\IEEEauthorblockN{Andre Wibisono\IEEEauthorrefmark{1},
Varun Jog\IEEEauthorrefmark{1},
and Po-Ling Loh\IEEEauthorrefmark{1}\IEEEauthorrefmark{2}}
\IEEEauthorblockA{Departments of Electrical \& Computer Engineering\IEEEauthorrefmark{1}
and Statistics\IEEEauthorrefmark{2} \\
University of Wisconsin - Madison\\
Madison, WI 53706\\
Email: aywibisono@wisc.edu, vjog@wisc.edu, loh@ece.wisc.edu}}

\maketitle

\begin{abstract}
We study the relationship between information- and estimation-theoretic quantities in time-evolving systems. We focus on the Fokker-Planck channel defined by a general stochastic differential equation, and show that the time derivatives of entropy, KL divergence, and mutual information are characterized by estimation-theoretic quantities involving an appropriate generalization of the Fisher information. Our results vastly extend De Bruijn's identity and the classical I-MMSE relation.
\end{abstract}

%
% For peerreview papers, this IEEEtran command inserts a page break and
% creates the second title. It will be ignored for other modes.
\IEEEpeerreviewmaketitle

\section{Introduction}

Information theory and statistical estimation are closely intertwined. Various identities and inequalities arise from fundamental concepts such as mutual information, Fisher information, and estimation error, and close parallels between the fields provide an avenue for devising and deriving new results. As a canonical example, the information-estimation result known as the moment-entropy inequality \cite{CovTho12} states that among all continuous random variables with a fixed variance, Gaussian random variables maximize entropy. Furthermore, Stam's inequality \cite{Sta59} states that for a fixed Fisher information, Gaussian random variables minimize entropy. The celebrated Cram\'er-Rao bound from statistics \cite{LehCas06}, which establishes a lower bound on the variance of an estimator in terms of the Fisher information, follows from the aforementioned facts.

In the past decade, significant effort has been devoted to uncovering new relationships between information-theoretic and estimation-theoretic quantities, beginning with the I-MMSE identity of Guo et al.~\cite{GuoEtAl05} for additive Gaussian noise channels.
The identity, provided in Theorem \ref{thm: immse} below, states that the derivative of the mutual information between the channel input and output with respect to the signal to noise ratio (snr) is proportional to the minimum mean square error (mmse) in estimating the input from the output. As derived in Stam~\cite{Sta59}, the result is equivalent to De Bruijn's identity (cf.\ equation \eqref{eq: debruijn}). The restatement of De Bruijn's identity in terms of the mmse spawned a host of additional information-estimation results, including an extension to non-Gaussian additive noise \cite{PalVer07}, a generalization to  the mismatched estimation setting \cite{Ver10}, and several pointwise information-estimation relations \cite{VenWei12}. Different I-MMSE type relations were also obtained for the Poisson channel \cite{AtaWei12, JiaEtAl13} and L\'{e}vy channel \cite{JiaEtAl14}.

In information theory, a ``channel" is a conditional distribution relating input symbols to output symbols. Whereas this coding-theoretic model is very useful for communication channels, however, it possesses certain drawbacks. Many real-world examples such as weather systems and financial markets are best explained as systems evolving in time according to random and deterministic influences. Although it is possible to view an evolving system as a communication channel, where the current state is the channel input and the state at a future time is the channel output, such an interpretation lacks insight about the path of the system. Nonetheless, information-theoretic ideas are still useful in characterizing the behavior of time-evolving systems. For instance, one might characterize how much information about the future is contained in the present state using quantities such as entropy, KL divergence, and mutual information. This has been discussed extensively in the climate science literature \cite{Leu90, Sch99, Kle02, Del04}.

One of the simplest and most useful ways of modeling evolving systems is via continuous-time Markov chains with a continuous state space, which may be analyzed using stochastic differential equations (SDEs). In particular, the probability distribution of such systems evolves according to a partial differential equation known as the Fokker-Planck equation. We focus on an information-theoretic analysis of time-evolving systems described by SDEs, and study the rate of change of various fundamental quantities as a function of time. We show that these rates are conveniently expressed in terms of a generalized Fisher information, so our results may be interpreted as generalizations of De Bruijn's identity for the SDE or Fokker-Planck setting. Notably, we obtain a clean identity expressing the time derivative of mutual information in terms of the mutual Fisher information, allowing us to derive new I-MMSE relations. Our results are readily specialized to specific stochastic processes, including Brownian motion, Ornstein-Uhlenbeck processes, and geometric Brownian motion.

The remainder of the paper is organized as follows: In Section~\ref{SecBackground}, we review existing results and present the family of SDEs to be analyzed.
In Section~\ref{SecMain}, we develop our main results concerning the evolution of entropy, KL divergence, and mutual information in terms of estimation-theoretic quantities.
We conclude in Section~\ref{SecDiscussion} with a discussion of open questions. Proofs are contained in the Appendix.

\section{Background and problem setup}
\label{SecBackground}

We begin by presenting formal statements of De Bruijn's identity and the I-MMSE relation, followed by a detailed characterization of the SDE framework discussed in our paper.

\subsection{DeBruijn's identity and I-MMSE}

Consider the additive Gaussian noise channel
\begin{align}\label{Eq:Gaussian}
X_t = X_0 + \sqrt{t} Z
\end{align}
where $Z \sim \mathcal{N}(0,1)$ is independent of $X_0$, and $t > 0$ is a time parameter---in this case equal to the variance of the noise---that controls how much randomness is added to the system.
As $t$ increases, we expect the output $X_t$ to be more random.
De Bruijn's identity~\cite{Sta59} confirms this intuition and asserts that
\begin{equation}\label{eq: debruijn}
\frac{d}{dt} H(X_t) = \frac{1}{2} J(X_t),
\end{equation}
where $H(X_t) \! = \! -\int_{\mathbb R}p_t(x)\log p_t(x) dx$ is the Shannon entropy with $p_t$ denoting the density of $X_t$, and
\begin{equation}\label{eq: fishinf usual}
J(X_t) = \E \left[ \left(\frac{\partial}{\partial x}\log p_t(X_t)\right)^2 \right] =  \int_{\mathbb R} \frac{p_t'(x)^2}{p_t(x)} dx > 0
\end{equation}
is the (nonparametric) Fisher information.

Another common parameterization of the channel~\eqref{Eq:Gaussian} is
\begin{equation*}
Y_ \snr = \sqrt{\snr} X + Z
\end{equation*}
where $\snr > 0$ is the signal to noise ratio and $Z \sim \mathcal{N}(0,1)$ is independent of $X$.
Guo et al.~\cite{GuoEtAl05} established the following I-MMSE relation, which states that the mutual information $I(\snr) = I(X; Y_\snr)$ increases at a rate given by the mmse, and showed their result is equivalent to De Bruijn's identity~\eqref{eq: debruijn}.
\begin{theorem}[Guo et al.~\cite{GuoEtAl05}]\label{thm: immse} 
We have
\begin{equation*}
\frac{d}{d\snr} I(\snr) = \frac{1}{2} \mmse (X \,|\, Y_\snr),
\end{equation*}
where $\mmse(X\,|\,Y_\snr) = \E[(X - \E[X \,|\, Y_\snr])^2]$ denotes the minimum mean square error for estimating $X$ from $Y_\snr$. 
\end{theorem}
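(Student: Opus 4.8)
The plan is to exploit the equivalence with De Bruijn's identity \eqref{eq: debruijn} advertised in the text, reducing Theorem~\ref{thm: immse} to a change of variables that feeds the already-available entropy derivative \eqref{eq: debruijn} into the mutual information computation. First I would strip off the input-independent part of the mutual information. Conditioned on $X = x$, the output $Y_\snr = \sqrt{\snr}\,X + Z$ is merely a translate of $Z \sim \mathcal{N}(0,1)$, so the conditional entropy equals $H(Z)$, which does not depend on $\snr$. Hence
\begin{equation*}
I(\snr) = H(Y_\snr) - H(Z), \qquad \frac{d}{d\snr} I(\snr) = \frac{d}{d\snr} H(Y_\snr),
\end{equation*}
and the task becomes computing the time derivative of the output entropy alone.

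Next I would rescale the output to match the De Bruijn channel \eqref{Eq:Gaussian}. Setting $t = 1/\snr$ and $X_t := X + \sqrt{t}\,Z$, one has $Y_\snr = \sqrt{\snr}\,X_t$, and the scaling law for differential entropy gives $H(Y_\snr) = H(X_t) + \tfrac{1}{2}\log \snr$. Differentiating with the chain rule, using $\tfrac{dt}{d\snr} = -t^2$ and De Bruijn's identity $\tfrac{d}{dt} H(X_t) = \tfrac12 J(X_t)$, yields
\begin{equation*}
\frac{d}{d\snr} H(Y_\snr) = -\frac{t^2}{2} J(X_t) + \frac{t}{2}.
\end{equation*}
The remaining work is to convert the Fisher information $J(X_t)$ into an estimation error.

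The crux of the argument is the Fisher-information--MMSE bridge $t^2 J(X_t) = t - \mmse(X \mid X_t)$. I would establish this via Tweedie's formula: writing $\rho_t = p_t'/p_t$ for the score of the output density, the identity $p_t'(y) = -\tfrac{1}{t}\E[(y - X) \mid X_t = y]\, p_t(y)$ gives $\E[X \mid X_t = y] = y + t\,\rho_t(y)$. Differentiating in $y$ and comparing with the standard relation $\tfrac{\partial}{\partial y}\E[X \mid X_t = y] = \tfrac{1}{t}\mathrm{Var}(X \mid X_t = y)$ produces $\mathrm{Var}(X \mid X_t = y) = t + t^2 \rho_t'(y)$; taking expectations over $y = X_t$ and using $\E[\rho_t'(X_t)] = \int p_t'' \, dy - J(X_t) = -J(X_t)$ (the first integral vanishing by a boundary argument) delivers $\mmse(X \mid X_t) = t - t^2 J(X_t)$. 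I expect the main obstacle to be the regularity bookkeeping here: justifying differentiation under the integral sign, the vanishing of the boundary term $\int p_t''\,dy = 0$, and the finiteness of all moments and Fisher informations involved.

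Substituting the bridge into the entropy derivative, the two $t/2$ terms cancel and I obtain $\tfrac{d}{d\snr} H(Y_\snr) = \tfrac{1}{2}\,\mmse(X \mid X_t)$. Finally, since $Y_\snr = \sqrt{\snr}\,X_t$ is an invertible deterministic transformation of $X_t$, conditioning on $Y_\snr$ coincides with conditioning on $X_t$, so $\mmse(X \mid X_t) = \mmse(X \mid Y_\snr)$, which closes the proof. The overall structure is thus: reduce to output entropy, rescale into the De Bruijn channel, invoke \eqref{eq: debruijn}, and bridge Fisher information to MMSE via Tweedie's formula.
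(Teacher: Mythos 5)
Your proof is correct, and it is worth noting at the outset that the paper never proves Theorem~\ref{thm: immse} itself: it is cited as Guo et al.'s result, and the closest the paper comes to a proof is the Brownian-motion specialization in Section~\ref{subsec: special}, which recovers the equivalent time-domain identity~\eqref{eq: immse normal} from the general SDE machinery. That route and yours differ genuinely in structure. The paper differentiates mutual information directly in the time parameterization via Theorem~\ref{Thm:I} (obtained by applying the generalized De Bruijn identity to both $H(X_t)$ and $H(X_t\,|\,X_0)$, the latter contributing the conditional Fisher information $J(X_t\,|\,X_0)=\tfrac1t$), and then converts the mutual Fisher information into an estimation error via Theorem~\ref{Thm:MMSE} applied to the score $\varphi(x_0,x_t)=-\tfrac1t(x_t-x_0)$. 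You instead work in the $\snr$ parameterization, kill the conditional entropy outright (it is constant), absorb the reparameterization into the entropy scaling law $H(Y_\snr)=H(X_t)+\tfrac12\log\snr$, and supply the Fisher--MMSE bridge $t^2 J(X_t)=t-\mmse(X\,|\,X_t)$ by hand via Tweedie's formula. The two arguments hinge on the \emph{same} underlying identity---your bridge is exactly the paper's statement $\Phi(X_0\,|\,X_t)=\tfrac1t-J(X_t)=\tfrac{1}{t^2}\mmse(X_0\,|\,X_t)$ rearranged---but they prove it differently: your Tweedie argument is elementary, self-contained, and classical (essentially Stam's and Guo--Shamai--Verd\'u's original equivalence), whereas the paper's Theorems~\ref{thm: fish} and~\ref{Thm:MMSE} derive it from the general fact that the conditional score is an unbiased estimator of the marginal score, which is what lets the same identity generalize to arbitrary Fokker--Planck channels (Ornstein--Uhlenbeck, geometric Brownian motion) where Tweedie's Gaussian-specific formula is unavailable. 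The regularity caveats you flag (differentiation under the integral, vanishing boundary terms) are the same ones the paper sweeps into its standing smoothness and rapid-decay assumptions, so they are not a gap relative to the paper's standard of rigor.
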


In terms of the time parameterization~\eqref{Eq:Gaussian}, by setting $\snr = \frac{1}{t}$ and $X_t = \sqrt{t} Y_{1/t}$ we see that Theorem~\ref{thm: immse} is equivalent to
\begin{equation}\label{eq: immse normal}
\frac{d}{dt} I(X_0;X_t) = -\frac{1}{2t^2} \mmse (X_0 \,|\, X_t).
\end{equation}

%%%%%%

\subsection{SDEs and Fokker-Planck equation}
\label{SecSDE}

Consider a general channel that outputs a real-valued stochastic process $(X_t)_{t \ge 0}$ following the SDE
\begin{align}\label{Eq:SDE}
dX_t = a(X_t,t) dt + \sigma(X_t,t) dW_t,
\end{align}
where $(W_t)_{t \ge 0}$ is standard Brownian motion, and $a(x,t)$ and $\sigma(x,t) > 0$ are arbitrary real-valued smooth functions. The choice $a \equiv 0$ and $\sigma \equiv 1$ generates the Gaussian channel~\eqref{Eq:Gaussian}, but the SDE framework~\eqref{Eq:SDE} is considerably more general. The drift $a(x,t) dt$ is the deterministic part of the dynamics, and the diffusion $\sigma(x,t) dW_t$ introduces randomness by incrementally adding Gaussian noise: For small $\delta > 0$, we have
\begin{align*}
X_{t+\delta} \approx X_t + a(X_t,t) \delta + \sigma(X_t,t) \sqrt{\delta} Z
\end{align*}
where $Z \sim \mathcal{N}(0,1)$ is independent of $X_t$.

A convenient way to study $X_t$ is via its density $p_t$.
When $X_t$ follows the SDE~\eqref{Eq:SDE}, the density function $p(x,t) = p_t(x)$ satisfies a partial differential equation called the {\em Fokker-Planck} equation.
This result is classical and technically requires that $a(x,t)$ and $\sigma(x,t)$ satisfy appropriate regularity and growth conditions (e.g., smoothness and Lipschitz properties~\cite{Mac92}).
\begin{lemma}\label{Lem:FP} [Proof in Appendix~\ref{AppLemFP}]
The density $p(x,t)$ of the process described by equation~\eqref{Eq:SDE} satisfies
\begin{align}\label{Eq:FP}
\part{p(x,t)}{t} \! = \! -\part{}{x}(a(x,t)p(x,t)) + \frac{1}{2} \part{^2}{x^2}(b(x,t)p(x,t)),
\end{align}
where $b(x,t) = \sigma(x,t)^2$.
\end{lemma}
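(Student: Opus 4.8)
The plan is to establish the Fokker-Planck equation in its weak (distributional) form first and then upgrade to the stated pointwise PDE using the smoothness of $p$. Let $f$ be an arbitrary smooth, compactly supported test function, and consider the expectation $\E[f(X_t)] = \int_{\R} f(x) p(x,t) \, dx$. The idea is to compute the time derivative of this quantity in two ways: by differentiating under the integral sign on the right-hand side, and by applying It\^o's formula to $f(X_t)$ on the left.

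First I would apply It\^o's formula to the process $X_t$ governed by~\eqref{Eq:SDE}. Since $(dX_t)^2 = \sigma(X_t,t)^2 \, dt = b(X_t,t) \, dt$, this yields
\[
df(X_t) = \left[ a(X_t,t) f'(X_t) + \tfrac{1}{2} b(X_t,t) f''(X_t) \right] dt + \sigma(X_t,t) f'(X_t) \, dW_t.
\]
Taking expectations and invoking that the stochastic integral against $dW_t$ is a zero-mean martingale (valid once the growth conditions guarantee its integrand is square-integrable), the $dW_t$ term drops out, giving
\[
\frac{d}{dt} \E[f(X_t)] = \E\!\left[ a(X_t,t) f'(X_t) + \tfrac{1}{2} b(X_t,t) f''(X_t) \right].
\]

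Next I would rewrite both sides as integrals against $p(x,t)$ and integrate by parts. Differentiating the left side under the integral gives $\int_{\R} f(x) \, \part{p(x,t)}{t} \, dx$, while the right side becomes $\int_{\R} \left[ a(x,t) f'(x) + \tfrac{1}{2} b(x,t) f''(x) \right] p(x,t) \, dx$. Integrating the first term by parts once and the second term by parts twice---boundary terms vanishing because $f$ has compact support---transfers the derivatives off $f$ and onto the coefficient-weighted density, producing
\[
\int_{\R} f(x) \, \part{p(x,t)}{t} \, dx = \int_{\R} f(x) \left[ -\part{}{x}\big(a(x,t) p(x,t)\big) + \frac{1}{2} \part{^2}{x^2}\big(b(x,t) p(x,t)\big) \right] dx.
\]
Since this holds for every test function $f$, the fundamental lemma of the calculus of variations forces the bracketed expressions to coincide, which is precisely~\eqref{Eq:FP}.

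The main obstacle is analytic rather than algebraic: justifying the two interchanges---differentiation under the integral sign for $\frac{d}{dt}\E[f(X_t)]$, and the vanishing of the expected stochastic integral---together with the integration by parts that discards boundary terms. These steps are exactly where the smoothness, Lipschitz, and growth hypotheses on $a$ and $\sigma$ (and the resulting decay of $p$ and its derivatives) enter, ensuring the relevant quantities are integrable and that $p$ is regular enough for the weak identity to be promoted to the classical pointwise PDE.
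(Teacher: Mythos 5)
Your proposal is correct and takes essentially the same route as the paper: both test the density against a smooth, compactly supported function, compute $\frac{d}{dt}\E[f(X_t)]$ in two ways, integrate by parts (once on the drift term, twice on the diffusion term), and conclude by arbitrariness of the test function. The only cosmetic difference is that you invoke It\^o's formula and the vanishing expectation of the stochastic integral, whereas the paper derives the same generator identity by hand via a second-order Taylor expansion of the one-step discretization $X_{t+\delta} \approx X_t + a(X_t,t)\delta + \sigma(X_t,t)\sqrt{\delta}\,Z$ --- two presentations of the same computation.
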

Note that Lemma~\ref{Lem:FP} implies the channel~\eqref{Eq:SDE} is linear in the space of input distributions, which means a mixture of inputs produces a mixture of outputs.
Also note that when $b(x,t)$ is independent of $x$, our model falls under the PDE framework of Toranzo et al.~\cite{TorEtAl16}; however, our SDE framework~\eqref{Eq:SDE} is somewhat more general.
Consider the following examples:

\subsubsection{Brownian motion}

As discussed above, the choice $a \equiv 0$ and $\sigma \equiv 1$ produces the solution $X_t = X_0 + W_t \stackrel{d}{=} X_0 + \sqrt{t} Z \sim \mathcal{N}(X_0,t)$, which is precisely the Gaussian channel~\eqref{Eq:Gaussian}.
The Fokker-Planck equation is the classical heat equation
\begin{equation*}
\frac{\partial p(x,t)}{\partial t} = \frac{1}{2}\frac{\partial^2p(x,t)}{\partial x^2}.
\end{equation*}
Any starting point $X_0 = x_0$ gives rise to the explicit solution
\begin{align}\label{Eq:Brownian}
p(x,t) = \frac{1}{\sqrt{2\pi t}} \exp \left( -\frac{(x-x_0)^2}{2t} \right).
\end{align}

\subsubsection{Ornstein-Uhlenbeck process}

The SDE is
\begin{align*}
dX_t = -\alpha X_t dt + dW_t,
\end{align*}
with $\alpha > 0$.
This corresponds to $a(x,t) = -\alpha x$ and $\sigma(x,t) = b(x,t) \equiv 1$.
The Ornstein-Uhlenbeck process is mean-reverting and arises in stochastic modeling of interest rates and particle velocities.
The explicit solution is $X_t = e^{-\alpha t}\left(X_0 + \int_0^t e^{\alpha s} dW_s\right) \stackrel{d}{=} e^{-\alpha t} X_0 + \sqrt{\frac{1}{2\alpha}(1-e^{-2\alpha t})} Z$, so
$X_t \to \mathcal{N}(0, \frac{1}{2\alpha})$ as $t \to \infty$.
The Fokker-Planck equation is
\begin{align}\label{Eq:FPOU}
\frac{\partial p(x,t)}{\partial t} = \alpha \frac{\partial}{\partial x} (xp(x,t)) + \frac{1}{2}\frac{\partial^2p(x,t)}{\partial x^2},
\end{align}
which may also be solved explicitly \cite{Shr04}.
Note that if $\alpha \to 0$, we recover Brownian motion~\eqref{Eq:Brownian}.

\subsubsection{Geometric Brownian motion}

The SDE is
\begin{align*}
dX_t = \mu X_t dt + \sigma X_t dW_t,
\end{align*}
where $\mu \in \R$ and $\sigma > 0$.
This corresponds to $a(x,t) = \mu x$, $\sigma(x,t) = \sigma x$, and $b(x,t) = \sigma^2 x^2$.
Geometric Brownian motion is used to model asset prices in financial mathematics, notably in Black-Scholes option pricing~\cite{BlaSch73}.
The explicit solution is $X_t = X_0 \exp((\mu - \frac{\sigma^2}{2}) t + \sigma W_t)$, so $\log X_t \sim \mathcal{N}\left(\log X_0 + (\mu - \frac{\sigma^2}{2}) t, \sigma^2 t\right)$.
The Fokker-Planck equation is
\begin{align}\label{Eq:FPGBM}
\frac{\partial p(x,t)}{\partial t} = -\mu \frac{\partial}{\partial x} (xp(x,t)) + \frac{\sigma^2}{2}\frac{\partial^2}{\partial x^2}(x^2 p(x,t)),
\end{align}
which also has an explicit solution \cite{Shr04}.

%%%%%%%

\section{Main results}
\label{SecMain}

We now generalize the information-estimation relations for the Gaussian channel~\eqref{Eq:Gaussian} to the SDE channel~\eqref{Eq:SDE}.
We express the time derivatives of fundamental information-theoretic quantities---entropy, relative entropy, and mutual information---in terms of a generalized Fisher information, producing analogs of De Bruijn's identity~\eqref{eq: debruijn}.
Although entropy is not always increasing, relative entropy and mutual information decrease at a rate given by the relative and mutual Fisher information, respectively.
We further interpret the Fisher information via a generalized Bayesian Cramer-Rao lower bound and express the mutual Fisher information as the mmse of estimating a function of the input from the output, thus producing generalizations of the I-MMSE relation~\eqref{eq: immse normal}.

\subsection{Generalized Fisher information}

Let $b \colon \R \to (0,\infty)$ be a positive function.
We define the \emph{Fisher information} with respect to $b$ to be
\begin{align}\label{eq: fishinf}
\!\! J_b(X) \!=\! \E\left[b(X) \Big(\frac{\partial}{\partial x} \log p(X)\Big)^2\right]
\!=\! \int_\R \! b(x) \frac{p'(x)^2}{p(x)} dx, \!\!
\end{align}
where $p$ denotes the distribution of the random variable $X$.
When $b \equiv 1$, we obtain the usual Fisher information~\eqref{eq: fishinf usual}.
Our generalized Fisher information differs from other notions in literature in that we are still motivated by Shannon entropy.
In contrast, Toranzo et al.~\cite{TorEtAl16} consider $\phi$-Fisher information to study $\phi$-entropy, while Lutwak et al.~\cite{LutEtal05} and Bercher~\cite{Ber13} consider $q$-Fisher information to study $q$-entropy, which includes R\'enyi and Tsallis entropies. For distributions $p$ and $q$, we define the \emph{relative Fisher information} with respect to $b$:\begin{align}\label{eq: fishinfrel}
J_b(p \,\|\, q) = \int_\R p(x) b(x) \left(\part{}{x} \log \frac{p(x)}{q(x)}\right)^2 dx,
\end{align}
Alternatively, $J_b(p\,\|\,q)$ may be viewed as the Bregman divergence of $J_b$.
Recall that the KL divergence, or relative entropy, may be written as the Bregman divergence of Shannon entropy.
\begin{lemma}\label{Lem:FishBreg} [Proof in Appendix~\ref{AppFishBreg}]
\begin{equation*}
J_b(p \,\|\, q) = J_b(p) - J_b(q) - \langle \nabla J_b(q), p-q \rangle.
\end{equation*}
\end{lemma}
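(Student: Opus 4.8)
The plan is to recognize $J_b$ as a functional on the space of densities, compute its first variation (functional gradient) $\nabla J_b(q)$, and then verify that the Bregman difference $J_b(p) - J_b(q) - \langle \nabla J_b(q), p-q\rangle$ coincides with the directly expanded form of $J_b(p\,\|\,q)$. First I would compute the gradient. Writing $J_b(q) = \int_\R L(x,q,q')\,dx$ with Lagrangian $L(x,q,q') = b(x)\,(q')^2/q$, the Euler--Lagrange first variation gives
\begin{equation*}
\nabla J_b(q) = \part{L}{q} - \frac{d}{dx}\part{L}{q'} = b\,\frac{(q')^2}{q^2} - 2b'\,\frac{q'}{q} - 2b\,\frac{q''}{q}.
\end{equation*}
A useful structural observation is that $J_b$ is homogeneous of degree one, i.e.\ $J_b(\lambda q) = \lambda J_b(q)$, so Euler's identity yields $\langle \nabla J_b(q), q\rangle = J_b(q)$; one confirms this directly by integrating the $q''$ term by parts against $b$. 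Consequently the terms $-J_b(q)$ and $+\langle \nabla J_b(q), q\rangle$ cancel in the Bregman difference, reducing the target to showing $J_b(p\,\|\,q) = J_b(p) - \langle \nabla J_b(q), p\rangle$.

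Next I would expand $J_b(p\,\|\,q)$ directly. Writing $\part{}{x}\log(p/q) = p'/p - q'/q$ and squaring,
\begin{equation*}
J_b(p\,\|\,q) = \int_\R b\,\frac{(p')^2}{p}\,dx - 2\int_\R b\,\frac{p'q'}{q}\,dx + \int_\R b\,p\,\frac{(q')^2}{q^2}\,dx,
\end{equation*}
whose first term is exactly $J_b(p)$. It then remains to match the last two terms against $-\langle \nabla J_b(q), p\rangle$. The bridge is a single integration by parts: moving the derivative off $p'$ in $\int_\R b\,(q'/q)\,p'\,dx$ produces precisely $\int_\R p\big(b'q'/q + b\,q''/q - b\,(q')^2/q^2\big)\,dx$, and substituting this in reduces the required identity to an algebraic rearrangement of the three remaining integrals.

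The main obstacle is analytic rather than algebraic: each integration by parts discards a boundary term of the form $[\,b\,(q'/q)\,p\,]_{-\infty}^{\infty}$ (and the analogous one from the homogeneity check), so the identity holds cleanly only under decay and regularity assumptions that force these to vanish---the same smoothness and growth conditions invoked for Lemma~\ref{Lem:FP}. I would state these hypotheses explicitly (e.g.\ $p,q$ smooth and strictly positive with $b\,p'q'/q$ and $b\,q'p/q$ tending to $0$ at infinity) and otherwise treat the manipulations as formal. Beyond the boundary terms, the only delicate point is correctly deriving the first variation of the nonlinear Lagrangian $b\,(q')^2/q$ and tracking the $b'$ contribution that arises because $b$ depends on $x$; this is where sign errors are most likely.
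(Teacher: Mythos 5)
Your proposal is correct and takes essentially the same route as the paper: the paper also computes the functional gradient first (by a direct first-variation expansion rather than citing Euler--Lagrange, arriving at the identical expression $\nabla J_b(q) = b\,q'^2/q^2 - 2(bq')'/q$), then closes the identity with a single integration by parts under the same smoothness-and-decay assumptions, and its inline cancellation of $-J_b(q)$ against $\int q\,\nabla J_b(q)\,dx$ (plus a vanishing boundary integral $\int (bq')'\,dx$) is exactly your Euler-homogeneity step. One small slip to fix in your writeup: the displayed integration by parts should carry a minus sign, $\int_\R b\,(q'/q)\,p'\,dx = -\int_\R p\bigl(b'q'/q + b\,q''/q - b\,(q')^2/q^2\bigr)dx$, and it is precisely this corrected sign that makes your final algebraic rearrangement close.
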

\noindent In particular, $J_b(p\,\|\,q) \ge 0$ shows that $J_b(p)$ is convex in $p$.

Analogous to the the mutual information, which measures the reduction in conditional entropy,
we define the \emph{mutual Fisher information} of random variables $X$ and $Y$ with respect to $b$:
\begin{align}
\label{Eq:MutFish}
J_b(X; Y) = J_b(Y \,|\, X) - J_b(Y),
\end{align}
where $J_b(Y\,|\,X) = \int_\R p_X(x) J_b(Y \,|\, X=x) \,dy$ is the conditional Fisher information. Note that $J_b(X; Y)$ is the difference between $\E[J_b(p_{Y|X}(\cdot\,|\,X))]$ and $J_b(\E[p_{Y|X}(\cdot\,|\,X)]) = J_b(Y)$. 
Since $J_b$ is convex, this implies $J_b(X; Y) \ge 0$.

\subsection{From information to estimation}
\label{SecInfoEst}

In fact, the mutual Fisher information~\eqref{Eq:MutFish} is equal to a natural generalization of the statistical Fisher information, a central estimation-theoretic quantity. Recall that the pointwise statistical Fisher information $\Phi(y) = \Phi(X\,|\,Y=y)$ of a parameterized family of distributions $\{p_{X|Y}(\,\cdot\,|\,y)\}$ is
\begin{align}\label{eq: fishinfstatusual}
\Phi(X | Y=y) \!=\! \int_\R p_{X|Y}(x|y) \Big( \frac{\partial}{\partial y} \log p_{X|Y}(x|y)\Big)^2 dx.
\end{align}
We define the statistical Fisher information with respect to $b$ as the weighted average of the pointwise Fisher information:
\begin{align}\label{eq: fishinfstat}
\Phi_b(X \,|\, Y) = \int_\R p_Y(y) \,b(y)\, \Phi(X \,|\, Y=y) \,dx.
\end{align}
The following key result
provides a bridge between information and estimation:
\begin{theorem}\label{thm: fish}  [Proof in Appendix~\ref{AppFish}]
The mutual Fisher information is equal to the statistical Fisher information:
\begin{equation*}
J_b(X ; Y) = \Phi_b(X \,|\, Y).
\end{equation*}
\end{theorem}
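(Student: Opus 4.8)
The plan is to reduce both sides of the claimed identity to integrals involving the \emph{joint} density and its partial derivative in $y$, and then match them. Write $p(x,y)$ for the joint density of $(X,Y)$, with marginal $p_Y(y) = \int_\R p(x,y)\,dx$ and conditionals $p_{Y|X}(y\,|\,x) = p(x,y)/p_X(x)$ and $p_{X|Y}(x\,|\,y) = p(x,y)/p_Y(y)$. Throughout, the weight $b$ depends only on $y$, the variable being differentiated, so it factors out of every $x$-integral and merely comes along for the ride; the content of the identity is therefore the classical Fisher-information decomposition in the $y$-direction, and setting $b\equiv 1$ recovers it.

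First I would put the conditional Fisher information into joint form. Since $\partial_y p_{Y|X}(y\,|\,x) = \partial_y p(x,y)/p_X(x)$, the integrand defining $J_b(Y\,|\,X=x)$, once weighted by $p_X(x)$, collapses:
\[
p_X(x)\, b(y)\, \frac{(\partial_y p_{Y|X}(y\,|\,x))^2}{p_{Y|X}(y\,|\,x)} = b(y)\,\frac{(\partial_y p(x,y))^2}{p(x,y)},
\]
so that
\[
J_b(Y\,|\,X) = \int_\R\!\int_\R b(y)\,\frac{(\partial_y p(x,y))^2}{p(x,y)}\,dx\,dy.
\]

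Next I would expand the statistical side. Writing $\log p_{X|Y}(x\,|\,y) = \log p(x,y) - \log p_Y(y)$, the $y$-score splits as $\partial_y \log p_{X|Y}(x\,|\,y) = \partial_y p(x,y)/p(x,y) - p_Y'(y)/p_Y(y)$. Substituting into $\Phi(X\,|\,Y=y)$, multiplying by $p_Y(y)$, and expanding the square produces three terms whose $x$-integrals I can evaluate: the quadratic score term gives $\int_\R (\partial_y p(x,y))^2/p(x,y)\,dx$, the cross term uses $\int_\R \partial_y p(x,y)\,dx = p_Y'(y)$, and the final term uses $\int_\R p(x,y)\,dx = p_Y(y)$. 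The cross and pure-marginal terms combine to $-(p_Y'(y))^2/p_Y(y)$, leaving
\[
p_Y(y)\,\Phi(X\,|\,Y=y) = \int_\R \frac{(\partial_y p(x,y))^2}{p(x,y)}\,dx - \frac{(p_Y'(y))^2}{p_Y(y)}.
\]
Weighting by $b(y)$ and integrating in $y$ then yields $\Phi_b(X\,|\,Y) = J_b(Y\,|\,X) - J_b(Y) = J_b(X;Y)$, which is the assertion.

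The algebra is routine; the only real obstacle is analytic, namely justifying the differentiation-under-the-integral step $\int_\R \partial_y p(x,y)\,dx = p_Y'(y)$ (and its counterpart for the marginal itself) together with the convergence of the resulting improper integrals. I would handle this by invoking the smoothness and integrability guaranteed by the regularity and growth hypotheses already assumed for the Fokker--Planck setting, so that dominated convergence applies and all exchanges of $\partial_y$ with $\int dx$ are legitimate; I would also note that finiteness of $J_b(Y\,|\,X)$ is what ensures each separated term is well defined before recombination.
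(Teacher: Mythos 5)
Your proof is correct and follows essentially the same route as the paper's: your decomposition $\partial_y \log p_{X|Y}(x\,|\,y) = \partial_y \log p(x,y) - \partial_y \log p_Y(y)$ is precisely the paper's Bayes-rule identity, since $\partial_y \log p(x,y) = \partial_y \log p_{Y|X}(y\,|\,x)$, and both arguments then expand the square, eliminate the cross term via the normalization identity $\int_\R \partial_y p(x,y)\,dx = p_Y'(y)$ (equivalently, the zero-mean property of the conditional score), and integrate against $b(y)$ to obtain $\Phi_b(X\,|\,Y) = J_b(Y\,|\,X) - J_b(Y) = J_b(X;Y)$. Phrasing the computation through the joint density rather than the conditional $p_{Y|X}$ is a purely cosmetic difference.
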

We now derive two theorems illustrating the intimate connections between the statistical Fisher information and quantities in estimation theory.

\subsubsection{Estimation-theoretic lower bound}

The pointwise Fisher information~\eqref{eq: fishinfstatusual} has a natural estimation interpretation via the Cramer-Rao lower bound; similarly, the statistical Fisher information~\eqref{eq: fishinfstat} provides a lower bound on the estimation error when we have a prior on the parameters.
The following result
provides a weighted version of van Trees' inequality~\cite{GilLev95}:
\begin{theorem}\label{thm: vantrees} [Proof in Appendix~\ref{AppVanTrees}]
Consider a parameterized family of distributions $\{p_{Y|X}(\,\cdot \,|\, x)\}$ with prior $p_X(x)$.
For any estimator $T(y)$ of $x$,
\begin{align*}
\E\left[\frac{1}{b(X)} (T(Y)-X)^2 \right] \ge 
\frac{1}{\Phi_b(Y|X) + J_b(X)} = 
\frac{1}{J_b(X|Y)}.
\end{align*}
\end{theorem}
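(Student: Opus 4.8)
The plan is to establish the inequality by a weighted Cauchy--Schwarz argument in the spirit of the classical van Trees inequality, and then read off the stated equality directly from Theorem~\ref{thm: fish}. Write $p(x,y) = p_X(x)\, p_{Y|X}(y\,|\,x)$ for the joint density of $(X,Y)$, and define the weighted score
\[
V = \sqrt{b(X)}\;\frac{\partial}{\partial x}\log p(X,Y).
\]
The conceptual heart of the argument is to insert the factor $\sqrt{b(X)}$ into the score while inserting the reciprocal factor $1/\sqrt{b(X)}$ into the estimation error, so that these weights cancel in the cross term of Cauchy--Schwarz but reappear in the two squared expectations that produce $J_b$ and $\Phi_b$.

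First I would compute the cross term. Since $\frac{\partial}{\partial x}\log p(x,y)\cdot p(x,y) = \frac{\partial}{\partial x} p(x,y)$, the $\sqrt{b(X)}$ factors cancel exactly, and integration by parts in $x$ gives
\[
\E\!\left[\frac{T(Y)-X}{\sqrt{b(X)}}\, V\right] = \E\!\left[(T(Y)-X)\,\tfrac{\partial}{\partial x}\log p(X,Y)\right] = \iint (T(y)-x)\,\tfrac{\partial}{\partial x}p(x,y)\,dx\,dy = 1,
\]
where the term involving $T(y)$ vanishes because $\int \frac{\partial}{\partial x} p(x,y)\,dx = 0$, and the term involving $x$ contributes $1$ after a further integration by parts (the boundary terms vanishing under the regularity assumptions, since $p(x,y)$ and $x\,p(x,y)$ decay as $x \to \pm\infty$). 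Note that the crucial cancellation means the cross term evaluates to the same constant $1$ as in the unweighted case.

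Next I would evaluate $\E[V^2]$. Decomposing $\frac{\partial}{\partial x}\log p = \frac{\partial}{\partial x}\log p_X + \frac{\partial}{\partial x}\log p_{Y|X}$ and expanding the square, the cross term vanishes: conditioning on $X=x$, the factor $b(x)$ passes through as a function of $x$ alone, while the inner expectation $\E[\frac{\partial}{\partial x}\log p_{Y|X}(Y\,|\,x)\,|\,X=x] = \int \frac{\partial}{\partial x} p_{Y|X}(y\,|\,x)\,dy = 0$. The two surviving terms are precisely $\E[b(X)(\frac{\partial}{\partial x}\log p_X)^2] = J_b(X)$ and $\E[b(X)(\frac{\partial}{\partial x}\log p_{Y|X})^2] = \Phi_b(Y\,|\,X)$ by~\eqref{eq: fishinf} and~\eqref{eq: fishinfstat}, so $\E[V^2] = \Phi_b(Y\,|\,X) + J_b(X)$. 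Applying Cauchy--Schwarz to the cross term then yields
\[
1 = \left(\E\!\left[\frac{T(Y)-X}{\sqrt{b(X)}}\, V\right]\right)^2 \le \E\!\left[\frac{(T(Y)-X)^2}{b(X)}\right] \cdot \big(\Phi_b(Y\,|\,X) + J_b(X)\big),
\]
which rearranges to the desired bound. The closing equality $\Phi_b(Y\,|\,X) + J_b(X) = J_b(X\,|\,Y)$ follows by combining Theorem~\ref{thm: fish} (giving $\Phi_b(Y\,|\,X) = J_b(Y;X)$) with the definition~\eqref{Eq:MutFish} of mutual Fisher information (giving $J_b(Y;X) = J_b(X\,|\,Y) - J_b(X)$).

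I expect the main obstacle to be purely technical: rigorously justifying the vanishing of the boundary terms in the two integrations by parts, together with the interchanges of differentiation and integration underlying the score identity $\E[\frac{\partial}{\partial x}\log p_{Y|X}(Y\,|\,x)\,|\,X=x]=0$. These require the smoothness and decay/growth conditions on $p_X$ and $p_{Y|X}$ alluded to after Lemma~\ref{Lem:FP}. The substantive content of the proof is the algebraic observation that inserting $\sqrt{b}$ and $1/\sqrt{b}$ leaves the cross term untouched while routing the weight into both Fisher-information terms, which is what makes the weighted bound emerge as cleanly as its unweighted counterpart.
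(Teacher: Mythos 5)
Your proposal is correct and follows essentially the same route as the paper's proof: the cross-term identity $\E[(T(Y)-X)\,\part{}{x}\log p_{XY}(X,Y)]=1$ via integration by parts, the Cauchy--Schwarz step with the weights $\sqrt{b(X)}$ and $1/\sqrt{b(X)}$ split exactly as in the paper, and the score decomposition $\part{}{x}\log p_{XY} = \part{}{x}\log p_X + \part{}{x}\log p_{Y|X}$ with vanishing cross term to identify $J_b(X)+\Phi_b(Y\,|\,X)$. The only (welcome) addition is that you make explicit the final identification $\Phi_b(Y\,|\,X)+J_b(X)=J_b(X\,|\,Y)$ via Theorem~\ref{thm: fish} and definition~\eqref{Eq:MutFish} with the roles of $X$ and $Y$ interchanged, a step the paper leaves implicit.
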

Thus, the conditional Fisher information $J_b(X\,|\,Y)$ is inversely proportional to the hardness of estimating $X$ from $Y$.

\subsubsection{mmse relation}

We define the mmse of $Y$ given $X$ with respect to $b$ as
\begin{align*}
\mmse_b(Y \,|\, X) = \min_{T} \E[b(X) (T(X) - Y)^2],
\end{align*}
where the minimization is over all estimators $T(X)$.
Note that the minimizer corresponds to the conditional expectation $T(X) = \E[Y|X]$, regardless of $b$. For a parameterized family of distributions $\{p_{Y|X}(\,\cdot\,|\,x)\}$, consider the pointwise score function 
$$\varphi(x,y) = \frac{\partial}{\partial y} \log p_{Y|X}(y\,|\,x).$$
Given a prior $p_X$, by Bayes rule we can define the other conditional distribution $p_{X|Y}(x|y) \propto p_X(x) p_{Y|X}(y|x)$.
Observe that for every fixed $y$, if $X \sim p_{X|Y}(\cdot\,|y)$, then $\varphi(X,y)$ is an unbiased estimator of the nonparametric score function:
\begin{equation*}
\E[\varphi(X,y) \,|\, Y = y] = \frac{\partial}{\partial y} \log p_Y(y).
\end{equation*}

This fact leads to the following result:

\begin{theorem}\label{Thm:MMSE} [Proof in Appendix~\ref{AppMMSE}]
\begin{equation*}
J_b(X;Y) = \mmse_b(\varphi(X,Y) \,|\, Y).
\end{equation*}
\end{theorem}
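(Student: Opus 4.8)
The plan is to unpack $\mmse_b(\varphi(X,Y)\,|\,Y)$ directly and reduce it to the difference $J_b(Y\,|\,X) - J_b(Y)$ that defines the mutual Fisher information in \eqref{Eq:MutFish}. By the definition of the $b$-weighted mmse, the optimal estimator of $\varphi(X,Y)$ from $Y$ is the conditional expectation $\E[\varphi(X,Y)\,|\,Y]$, independent of $b$. The unbiasedness fact stated just before the theorem, $\E[\varphi(X,y)\,|\,Y=y] = \frac{\partial}{\partial y}\log p_Y(y)$, identifies this optimal estimator as the marginal (nonparametric) score of $Y$. Hence I would begin by writing
\[
\mmse_b(\varphi(X,Y)\,|\,Y) = \E\!\left[b(Y)\Big(\varphi(X,Y) - \tfrac{\partial}{\partial y}\log p_Y(Y)\Big)^2\right].
\]

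Next I would expand the square into three pieces. The first is $\E[b(Y)\varphi(X,Y)^2]$, which, upon writing the expectation as $\int p_X(x)\int p_{Y|X}(y|x)\, b(y)\,\varphi(x,y)^2\,dy\,dx$ and recalling $\varphi(x,y) = \frac{\partial}{\partial y}\log p_{Y|X}(y|x)$, is exactly the conditional Fisher information $J_b(Y\,|\,X)$. The third piece is $\E[b(Y)(\frac{\partial}{\partial y}\log p_Y(Y))^2] = J_b(Y)$, the ordinary $b$-Fisher information of the marginal. The crux is the cross term: conditioning on $Y$ and invoking the unbiasedness fact a second time collapses $\E[\varphi(X,Y)\,|\,Y]$ to $\frac{\partial}{\partial y}\log p_Y(Y)$, so the cross term also equals $J_b(Y)$. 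Combining, $\mmse_b(\varphi(X,Y)\,|\,Y) = J_b(Y\,|\,X) - 2J_b(Y) + J_b(Y) = J_b(Y\,|\,X) - J_b(Y) = J_b(X;Y)$, which is the claim.

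The main obstacle I anticipate is the bookkeeping around two distinct conditional scores: $\varphi$ is built from the forward channel $p_{Y|X}$, whereas the quantity being estimated behaves like a version of $\varphi$ centered against the marginal score $\frac{\partial}{\partial y}\log p_Y$. Everything hinges on the orthogonality relation supplied before the statement — that the conditional mean of $\varphi(X,y)$ is precisely the marginal score — since this is exactly what forces the cross term to match $J_b(Y)$ and makes the MSE decompose as the gap between conditional and marginal Fisher informations.

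As an alternative route, one could pass through Theorem~\ref{thm: fish}: by Bayes' rule $\frac{\partial}{\partial y}\log p_{X|Y}(x|y) = \varphi(x,y) - \frac{\partial}{\partial y}\log p_Y(y)$, so the pointwise statistical Fisher information $\Phi(X\,|\,Y=y)$ is already the conditional second moment of $\varphi(X,y)$ about its mean $\frac{\partial}{\partial y}\log p_Y(y)$; averaging against $p_Y(y)\,b(y)$ then yields the $b$-weighted mmse, and the identity $J_b(X;Y) = \Phi_b(X\,|\,Y)$ completes the argument. In either approach I would note that the differentiation-under-the-integral step underlying the unbiasedness fact is covered by the smoothness and growth assumptions, but I would not reprove that fact here.
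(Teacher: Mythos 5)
Your proposal is correct and follows essentially the same route as the paper's proof: both identify the optimal estimator $\E[\varphi(X,Y)\,|\,Y]$ with the marginal score $\frac{\partial}{\partial y}\log p_Y(Y)$ via the stated unbiasedness fact, compute $\E[b(Y)\varphi(X,Y)^2] = J_b(Y\,|\,X)$, and identify the remaining term as $J_b(Y)$. The only cosmetic difference is that the paper invokes the weighted variance decomposition $\E[b(Y)(\varphi-\E[\varphi|Y])^2] = \E[b(Y)\varphi^2] - \E[b(Y)(\E[\varphi|Y])^2]$ directly, whereas you expand the square into three terms and collapse the cross term by conditioning---the same argument in slightly longer form.
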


\subsection{De Bruijn's identity}
\label{SecDeBruijn}

We now describe our generalizations of De Bruijn's identity.
In the statements below, we write $b_t(x) = b(x,t) = \sigma(x,t)^2$.

\subsubsection{Time derivative of entropy}

Our first result
relates the rate of change of Shannon entropy to the Fisher information:

\begin{theorem}\label{Thm:Ent} [Proof in Appendix~\ref{AppEnt}]
Let $X_t$ be the output of the SDE~\eqref{Eq:SDE}.
Then
\begin{align*}
\frac{d}{dt} H(X_t) &= \frac{1}{2} J_{b_t}(X_t) + \E\left[\part{}{x}a(X_t,t) - \frac{1}{2} \part{^2}{x^2} b(X_t,t)\right].
\end{align*}
\end{theorem}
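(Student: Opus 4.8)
The plan is to differentiate the entropy $H(X_t) = -\int_\R p(x,t)\log p(x,t)\,dx$ directly under the integral sign. Writing $p = p(x,t)$ and using that $\int_\R \part{p}{t}\,dx = \frac{d}{dt}\int_\R p\,dx = 0$ by conservation of mass, the contribution coming from differentiating $\log p$ contributes only $-\int_\R \part{p}{t}\,dx = 0$ and drops out, leaving
\begin{equation*}
\frac{d}{dt}H(X_t) = -\int_\R \part{p}{t}\log p\,dx.
\end{equation*}
I would then substitute the Fokker-Planck equation~\eqref{Eq:FP} for $\part{p}{t}$ and split the result into a drift term $A = \int_\R \part{}{x}(a p)\log p\,dx$ and a diffusion term $B = -\frac{1}{2}\int_\R \part{^2}{x^2}(b p)\log p\,dx$.

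For the drift term, integrating by parts once and using $\part{}{x}\log p = p'/p$ (throughout, a prime denotes $\part{}{x}$) gives $A = -\int_\R a\,p'\,dx$; a second integration by parts then yields $A = \int_\R (\part{}{x}a)\,p\,dx = \E[\part{}{x}a(X_t,t)]$. For the diffusion term, one integration by parts gives $B = \frac12\int_\R \part{}{x}(bp)\,\frac{p'}{p}\,dx$, and expanding $\part{}{x}(bp) = (\part{}{x}b)\,p + b\,p'$ splits $B$ into two pieces. The piece $\frac12\int_\R b\,(p')^2/p\,dx$ is exactly $\frac12 J_{b_t}(X_t)$ by the definition~\eqref{eq: fishinf} of the generalized Fisher information, while the remaining piece $\frac12\int_\R (\part{}{x}b)\,p'\,dx$ becomes $-\frac12\int_\R(\part{^2}{x^2}b)\,p\,dx = -\frac12\E[\part{^2}{x^2}b(X_t,t)]$ after a final integration by parts. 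Combining $A$ and $B$ reproduces the claimed identity.

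The routine algebra above hides the only real difficulty, which is analytic rather than computational: justifying both the differentiation under the integral sign and the vanishing of every boundary term generated by the integrations by parts. Each step requires that the relevant products---such as $a\,p\log p$, $a\,p$, $\part{}{x}(bp)\log p$, and $(\part{}{x}b)\,p$---decay to zero as $x\to\pm\infty$. These conditions hold under the smoothness and growth hypotheses on $a$ and $\sigma$ already invoked for Lemma~\ref{Lem:FP}, together with sufficiently fast tail decay of $p_t$, so I would state these regularity assumptions explicitly and verify the decay of each boundary term as the technical heart of the argument.
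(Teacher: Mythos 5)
Your proposal is correct and follows essentially the same route as the paper's proof: differentiate under the integral (the $\partial_t \log p$ contribution vanishing by conservation of mass), substitute the Fokker-Planck equation, and integrate by parts to produce $\E[\part{}{x}a]$, $\frac12 J_{b_t}(X_t)$, and $-\frac12\E[\part{^2}{x^2}b]$. The only cosmetic difference is in the diffusion term, where you integrate by parts once and expand $\part{}{x}(bp) = b'p + bp'$, whereas the paper moves both derivatives onto $\log p$ and invokes the identity $\part{^2}{x^2}\log p = \frac{1}{p}\part{^2p}{x^2} - \bigl(\part{}{x}\log p\bigr)^2$; these are equivalent rearrangements of the same computation.
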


Note that in the case of heat equation, when $a \equiv 0$ and $b \equiv 1$, this result recovers the classical De Bruijn's identity~\eqref{eq: debruijn}.
However, in the general case, the entropy does not necessarily always increase.
This may seem odd, but as the results below show, we obtain monotonicity by considering  relative entropy.

\subsubsection{Time derivative of KL divergence}

Let $\KL(p \, \|\, q) = \int_\R p(x) \log \frac{p(x)}{q(x)} dx$ denote the KL divergence.
The following result establishes that the relative entropy between any two solutions is always decreasing, with a rate given by the relative Fisher information:

\begin{theorem}\label{Thm:KL} [Proof in Appendix~\ref{AppKL}]
Let $X_t$, $Y_t$ denote the output random variables of the channel~\eqref{Eq:SDE} with distributions $p_t, q_t$. Then
\begin{align*}
\frac{d}{dt} \KL(p_t \,\|\, q_t) = -\frac{1}{2} J_{b_t}(p_t \,\|\, q_t).
\end{align*}
\end{theorem}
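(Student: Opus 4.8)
The plan is to differentiate $\KL(p_t \,\|\, q_t)$ directly under the integral sign, substitute the Fokker--Planck equation from Lemma~\ref{Lem:FP} for the time derivatives of \emph{both} densities, and then integrate by parts so as to expose the relative Fisher information~\eqref{eq: fishinfrel}. Abbreviating $p = p_t$, $q = q_t$, and $r = \log(p/q)$, I would first expand the derivative using the product rule:
\begin{align*}
\frac{d}{dt}\KL(p \,\|\, q) = \int_\R (\partial_t p)\, r\,dx + \int_\R \partial_t p\,dx - \int_\R \frac{p}{q}\,\partial_t q\,dx,
\end{align*}
where the second integral arises from $p\,\partial_t\log p = \partial_t p$. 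Since $\int_\R p\,dx = 1$ for all $t$, that middle term vanishes, leaving $\int_\R (\partial_t p)\, r\,dx - \int_\R \frac{p}{q}\,\partial_t q\,dx$.

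Next I substitute $\partial_t p = -\partial_x(ap) + \tfrac12\partial_x^2(bp)$ and the analogous expression for $\partial_t q$, and integrate by parts, assuming enough decay at $\pm\infty$ that all boundary terms vanish. The computation splits into a drift part (involving $a$) and a diffusion part (involving $b$). For the drift, a single integration by parts gives a contribution $\int_\R ap\,\partial_x r\,dx$ from the $p$-equation and a contribution $-\int_\R ap\,\partial_x r\,dx$ from the $q$-equation, where the latter uses the identity $\partial_x(p/q) = (p/q)\,\partial_x r$. These cancel exactly, so the drift $a$ does not contribute --- consistent with the stated rate depending only on $b$.

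For the diffusion contributions I integrate by parts once each: the $p$-term gives $-\tfrac12\int_\R \partial_x(bp)\,\partial_x r\,dx$, while the $q$-term gives $+\tfrac12\int_\R (p/q)\,\partial_x r\,\partial_x(bq)\,dx$. Collecting these, the derivative equals $-\tfrac12\int_\R \partial_x r\,\big[\partial_x(bp) - (p/q)\,\partial_x(bq)\big]\,dx$. The key algebraic step is the pointwise identity
\begin{align*}
\partial_x(bp) - \frac{p}{q}\,\partial_x(bq) = b\,p\,\partial_x r,
\end{align*}
which follows by expanding both derivatives via the product rule and cancelling the common term $b'p$. Substituting this back yields $-\tfrac12\int_\R b\,p\,(\partial_x r)^2\,dx = -\tfrac12 J_{b_t}(p \,\|\, q)$, which is exactly the claim.

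I expect the main obstacle to be technical rather than conceptual: rigorously justifying differentiation under the integral sign and, especially, the vanishing of every boundary term in the integrations by parts. This requires sufficient decay of $p$, $q$, and their first derivatives (against the possible growth of $a$, $b$, and their derivatives) at $\pm\infty$ --- precisely the regularity and growth hypotheses already invoked for the Fokker--Planck equation in Lemma~\ref{Lem:FP}, which I would carry as standing assumptions. A secondary care-point is that $q_t > 0$ everywhere, so that $r = \log(p/q)$ and the quotient $p/q$ are well defined; this holds for solutions started from strictly positive densities.
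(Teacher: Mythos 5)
Your proposal is correct and follows essentially the same route as the paper's proof: the identical three-term expansion of the derivative (with the middle term vanishing by conservation of mass), substitution of the Fokker--Planck equation, and integration by parts under the same decay assumptions to expose $J_{b_t}(p_t \,\|\, q_t)$. The only difference is bookkeeping: the paper integrates by parts twice on the $\int (\partial_t p)\log(p/q)\,dx$ term so as to reconstruct the Fokker--Planck operator acting on $q$ (making the $\int (p/q)\,\partial_t q\,dx$ terms cancel at the end), whereas you substitute both Fokker--Planck equations up front, cancel the drift contributions directly, and merge the diffusion terms via the pointwise identity $\partial_x(bp) - (p/q)\,\partial_x(bq) = bp\,\partial_x \log(p/q)$ --- the same cancellations carried out in a slightly more symmetric order.
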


Thus, the KL divergence is a contracting map along any two trajectories $p_t$ and $q_t$, implying the existence of at most one fixed point of the channel; i.e., a stationary probability distribution $p_\infty$ satisfying equation~\eqref{Eq:FP}.
However, such a distribution $p_\infty$ may not always exist, as in the case of the heat equation.

\subsubsection{Time derivative of mutual information}

Recall that the mutual information satisfies $I(X;Y) = H(Y) - H(Y \,|\, X)$, where  $H(Y \,|\, X) = \int_\R p_X(x) H(Y\,|\,X=x)\, dx$ is the conditional Shannon entropy. Consider the time derivative of $I(X_0;X_t)$, where $X_t$ is the output of the channel~\eqref{Eq:SDE} with input $X_0$.
Theorem~\ref{Thm:Ent} expresses the time derivative of $H(X_t)$ in terms of the Fisher information $J_{b_t}(X_t)$; since the channel~\eqref{Eq:SDE} is linear, we also obtain a formula for the time derivative of the conditional entropy $H(X_t | X_0)$ in terms of $J_{b_t}(X_t|X_0)$. 
Recalling definition~\eqref{Eq:MutFish}, this yields the following result:
\begin{theorem}
\label{Thm:I} [Proof in Appendix~\ref{AppI}]
The output $X_t$ of the SDE channel~\eqref{Eq:SDE} with input $X_0$ satisfies
\begin{align*}
\frac{d}{dt} I(X_0; X_t) &= -\frac{1}{2} J_{b_t}(X_0;X_t).
\end{align*}
\end{theorem}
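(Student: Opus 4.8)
The plan is to exploit the decomposition $I(X_0; X_t) = H(X_t) - H(X_t \,|\, X_0)$ and differentiate each term separately, applying Theorem~\ref{Thm:Ent} to both the marginal and the conditional entropy. The structural fact that makes this work is that the channel~\eqref{Eq:SDE} is linear, so the conditional density $p_{X_t \,|\, X_0}(\,\cdot\,|\,x_0)$ is itself a solution of the Fokker-Planck equation~\eqref{Eq:FP}, now started from the point mass at $x_0$; hence Theorem~\ref{Thm:Ent} applies verbatim to each conditional law.

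First I would differentiate the marginal entropy using Theorem~\ref{Thm:Ent} directly,
\begin{align*}
\frac{d}{dt} H(X_t) = \frac{1}{2} J_{b_t}(X_t) + \E\left[\part{}{x}a(X_t,t) - \frac{1}{2}\part{^2}{x^2} b(X_t,t)\right].
\end{align*}
Next, for each fixed $x_0$ I would apply the same theorem to the conditional law of $X_t \,|\, X_0 = x_0$, obtaining an analogous identity whose Fisher term is $J_{b_t}(X_t \,|\, X_0 = x_0)$ and whose correction term is the conditional expectation of the same integrand. Integrating against the prior $p_{X_0}$ then gives
\begin{align*}
\frac{d}{dt} H(X_t \,|\, X_0) = \frac{1}{2} J_{b_t}(X_t \,|\, X_0) + \E\left[\part{}{x}a(X_t,t) - \frac{1}{2}\part{^2}{x^2} b(X_t,t)\right],
\end{align*}
where the correction term has collapsed to a full expectation by the tower property.

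The crucial observation is that the correction term depends only on the marginal law of $X_t$ through the integrand $\part{}{x}a(x,t) - \frac{1}{2}\part{^2}{x^2}b(x,t)$, so averaging its conditional expectations over $p_{X_0}$ recovers exactly the unconditional expectation appearing in the marginal formula. The two correction terms are therefore identical, and subtracting the two identities cancels them, leaving
\begin{align*}
\frac{d}{dt} I(X_0; X_t) = \frac{1}{2}\left(J_{b_t}(X_t) - J_{b_t}(X_t \,|\, X_0)\right) = -\frac{1}{2} J_{b_t}(X_0; X_t),
\end{align*}
where the final equality is the definition~\eqref{Eq:MutFish} of the mutual Fisher information.

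The main obstacle is not this cancellation but the justification of the two supporting facts underneath it: (i) that the conditional process $X_t \,|\, X_0 = x_0$ genuinely solves~\eqref{Eq:FP} so that Theorem~\ref{Thm:Ent} may be invoked pointwise in $x_0$, and (ii) that one may interchange the time derivative with the integration against $p_{X_0}$ when passing from $\frac{d}{dt} H(X_t \,|\, X_0 = x_0)$ to $\frac{d}{dt} H(X_t \,|\, X_0)$. I expect (ii) to be the delicate point, as it requires uniform-in-$x_0$ integrability of the entropy derivatives; I would handle it under the same smoothness and growth hypotheses on $a$ and $\sigma$ that already underlie Lemma~\ref{Lem:FP} and Theorem~\ref{Thm:Ent}.
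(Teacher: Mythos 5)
Your proposal is correct and follows essentially the same route as the paper's own proof: apply Theorem~\ref{Thm:Ent} to the conditional law started from a point mass at $x_0$ (valid by linearity of the channel), average over the prior $p_{X_0}$ so the drift--diffusion correction term collapses to the unconditional expectation by the tower property, and subtract from the marginal identity so that the correction terms cancel, leaving $-\frac{1}{2} J_{b_t}(X_0;X_t)$ by definition~\eqref{Eq:MutFish}. Your additional remarks on the regularity needed to interchange $\frac{d}{dt}$ with the integral over $p_{X_0}$ go slightly beyond what the paper makes explicit, but are consistent with its standing smoothness and decay assumptions.
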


\subsection{Special cases}
\label{subsec: special}
We specialize our results to the examples in Section~\ref{SecSDE}. For more details, see Appendix~\ref{AppSpecial}.

\subsubsection{Brownian motion}

Since $b \equiv 1$, the function $J_b$ is the usual nonparametric Fisher information $J$, and Theorem \ref{Thm:Ent} yields De Bruijn's identity~\eqref{eq: debruijn}.
We calculate the conditional Fisher information using the coupling $X_t \stackrel{d}{=} X_0 + \sqrt t Z$:
\begin{align*}
J(X_t\,|\,X_0) 
= \int_{\mathbb R} p_{X_0}(x_0) \, J(x_0 + \sqrt t Z) \, dx_0 = \frac{1}{t}.
\end{align*}
The mutual Fisher information is $\Phi(X_0|X_t) = \frac{1}{t}-J(X_t) \ge 0$, which implies $J(X_t) \le \frac{1}{t}$.
Theorem~\ref{Thm:I} says that
\begin{equation*}
\frac{d}{dt} I(X_0; X_t) = \frac{1}{2} \left(J(X_t) - \frac{1}{t}\right).
\end{equation*}
Using the formula~\eqref{Eq:Brownian}, we may compute the score function:
\begin{align*}
\varphi(x_0,x_t) = \frac{\partial}{\partial x_t} \log p_{X_t|X_0}(x_t|x_0) = -\frac{1}{t}(x_t-x_0).
\end{align*}
By Theorem~\ref{Thm:MMSE}, we then obtain
\begin{align*}
\Phi(X_0\,|\,X_t) = \frac{1}{t^2} \mmse((X_t-X_0) |\, X_t) = \frac{1}{t^2} \mmse(X_0 \,|\, X_t),
\end{align*}
which, with Theorem~\ref{Thm:I}, recovers the I-MMSE identity~\eqref{eq: immse normal}.

\subsubsection{Ornstein-Uhlenbeck process}

Again, we have $J_b = J$. Applying Theorem \ref{Thm:Ent} yields
\begin{equation*}
\frac{d}{dt} H(X_t) = \frac{1}{2} J(X_t) - \alpha.
\end{equation*}
Note that if $\alpha \le 0$, entropy always increases---but if \mbox{$\alpha > 0$}, which is the regime of interest, entropy need not be monotonic. 

Using $X_t \stackrel{d}{=} e^{-\alpha_t}(X_0 + \sqrt{\frac{1}{2\alpha}(e^{2\alpha t}-1)} Z)$,
we may compute the conditional Fisher information $J(X_t \,|\, X_0) = \frac{2\alpha}{1-e^{-2\alpha t}}$
and the mutual Fisher information $$\Phi(X_0\,|\,X_t) = \frac{2\alpha}{1-e^{-2\alpha t}} - J(X_t).$$
Since $\Phi(X_0\,|\,X_t) \ge 0$, this yields the bound $J(X_t) \le \frac{2\alpha}{1-e^{-2\alpha t}}$, which monotonically decreases to $2\alpha$ as $t \to \infty$.

Using the explicit solution to equation~\eqref{Eq:FPOU}, we obtain
\begin{align*}
\varphi(x_0,x_t) = - \frac{2\alpha (x_t - e^{-\alpha t} x_0)}{1-e^{-2\alpha t}}.
\end{align*}
By Theorems~\ref{Thm:MMSE} and~\ref{Thm:I}, we then deduce
the I-MMSE relation
\begin{align*}
\frac{d}{dt} I(X_0;X_t) = -\frac{2\alpha^2 e^{-2\alpha t}}{(1-e^{-2\alpha t})^2} \mmse(X_0\,|\,X_t).
\end{align*}

\subsubsection{Geometric Brownian motion}

Since $b(x,t) = \sigma^2 x^2$, we have $J_b \neq J$.
Applying Theorem~\ref{Thm:Ent} yields
\begin{equation*}
\frac{d}{dt} H(X_t) = \frac{1}{2}J_b(X_t) + \mu - \frac{1}{2}\sigma^2.
\end{equation*}
Thus, the entropy may not increase monotonically.
Using the explicit solution to equation~\eqref{Eq:FPGBM}, we may compute
\begin{align*}
\varphi(x_0,x_t) = -\frac{1}{x_t}\left(\frac{\log x_t - \log x_0 - (\mu-\frac{\sigma^2}{2}) t}{\sigma^2 t} + 1 \right).
\end{align*}
Then by Theorem~\ref{Thm:MMSE}, we derive the I-MMSE relation
\begin{equation*}
\frac{d}{dt} I(X_0; X_t) = -\frac{1}{2\sigma^2 t^2} \mmse(\log X_0 \,|\, X_t).
\end{equation*}

%%%%%%%

\subsection{Multivariate extension}
\label{Sec:Mult}

Our results extend without difficulty to the multivariate setting where $X_t$ is a stochastic process in $\R^d$ evolving according to the SDE~\eqref{Eq:SDE}, where $a(x,t) \in \R^d$ is a drift vector, $\sigma(x,t) \in \R^{d \times d}$ is a covariance matrix, and $W_t$ is standard Brownian motion in $\R^d$.
The weight matrix is given by
\begin{align*}
b(x,t) = \sigma(x,t) \sigma(x,t)^\top,
\end{align*}
which is assumed to be uniformly positive definite.

We define the generalized Fisher information~\eqref{eq: fishinf} with respect to a positive definite matrix $b(x)$:
\begin{align*}
J_b(X) = \E\left[\|\nabla \log p(X)\|^2_{b(X)}\right] = \int_{\R^d} \frac{\|\nabla p(x)\|^2_{b(x)}}{p(x)} dx,
\end{align*}
where $\|v\|^2_{b(x)} = v^\top b(x) v = \Tr(b(x) vv^\top)$ is the Mahalanobis inner product of $v \in \R^d$.
The relative~\eqref{eq: fishinfrel} and mutual Fisher information~\eqref{Eq:MutFish} are defined similarly, and the statistical Fisher information~\eqref{eq: fishinfstat} is defined as
\begin{align*}
\Phi_b(X \,|\, Y) = \int_{\R^d} p_Y(y) \, \Tr(b(y) \Phi(X \,|\, Y=y)) \, dy,
\end{align*}
where $\Phi(X \,|\, Y=y)$ is the usual Fisher information matrix $\int_{\R^d} p_{X|Y}(x|y) (\part{}{y} \log p_{X|Y}(x|y)) (\part{}{y} \log p_{X|Y}(x|y))^\top dx$. With these definitions, our results hold unchanged (see Appendix~\ref{AppMultivariate} for more details).

\section{Discussion and future work}
\label{SecDiscussion}

We have established information-estimation identities for time-evolving systems. Our results extend the classical De Bruijn's identity, which concerns the rate of entropy growth in a Brownian motion process, to the time derivatives of entropy, KL divergence, and mutual information for processes described by general SDEs. The predictability of such systems relies on the information contained in the current state regarding future states. At a high level, the current state contains progressively less information about future states of the system; we derive the specific rates of change in information in terms of quantities arising from estimation.

Theorem \ref{Thm:I} relates the derivative of $I(X_0;X_t)$ to the statistical Fisher information, which by Theorem~\ref{thm: vantrees} is inversely proportional to the difficulty of estimating $X_0$ from $X_t$.
This difficulty should increase with time, suggesting that $I(X_0;X_t)$ may decrease in a convex manner. %a convex function of time.
Costa \cite{Cos85} showed that the entropy of Brownian motion is a convex function, and Chen et al.~\cite{Che15} showed that the first four derivatives of entropy alternate in sign.
We conjecture a similar property for higher-order derivatives of mutual information in general SDEs.

We suspect that the relationship between Fisher information and estimation may be generalized to Bregman divergences. Our results are based on the function $f_x(y) = \frac{1}{2} b(x) y^2$, corresponding to Gaussian randomness generated via $\sqrt{b(x)} dW_t$. It may be interesting to investigate stochastic processes corresponding to general convex functions. Finally, one could explore the connection to optimal transport, which interprets the Fokker-Planck equation~\eqref{Eq:FP} as the gradient flow of relative entropy in the space of probability densities with respect to the Wasserstein metric. 
Villani \cite{Vil03} then views Theorem~\ref{Thm:KL} as an analog of De Bruijn's identity.
Note that such a viewpoint focuses on a PDE rather than SDE formulation.

\bibliographystyle{ieeetr}	% (uses file "plain.bst")
%\bibliography{myrefs}		% expects file "myrefs.bib"
\bibliography{isit_version_app.bbl}		% expects file "myrefs.bib"

\appendices

\section{Proofs of lemmas}

In this Appendix, we provide proofs of the technical lemmas stated in our paper.
Throughout, we assume that all densities and functions are smooth and rapidly decreasing, so we may differentiate under the integral sign and apply integration by parts with all boundary terms zero.
These are standard assumptions in parabolic partial differential equations, and may be ensured by constraining the functions $a$ and $\sigma$ in the channel definition~\eqref{Eq:SDE} to be smooth and Lipschitz (see Mackey~\cite{Mac92} and the references cited therein for technical details).
These assumptions hold for all our examples.

\subsection{Proof of Lemma~\ref{Lem:FP}}
\label{AppLemFP}

The Fokker-Planck equation is a standard result; we follow the outline of Mackey~\cite[$\S$11]{Mac92}.

Let $h \colon \R \to \R$ be a smooth, compactly supported function, and consider the expectation
\begin{align*}
E(t) = \E[h(X_t)] = \int_\R p(x,t) h(x) dx.
\end{align*}
We compute the time derivative $\dot E(t) = \frac{d}{dt} E(t)$ in two ways:
First, by differentiating under the integral sign,
\begin{align}\label{Eq:dot1}
\dot E(t) = \frac{d}{dt} \int_\R p(x,t) h(x) dx = \int_\R \part{p(x,t)}{t} h(x) dx.
\end{align}
Second, we compute $\dot E(t)$ as the limit of the difference $\frac{1}{\delta}(E(t+\delta) - E(\delta))$ as $\delta \to 0$.
From the channel definition~\eqref{Eq:SDE}, for small $\delta$, we have
\begin{align*}
X_{t+\delta} = X_t + a(X_t,t) \delta + \sigma(X_t,t) \sqrt{\delta} Z,
\end{align*}
where $Z \sim N(0,1)$ is independent of $X_t$.
By the second-order Taylor expansion of $h$, and ignoring terms of order smaller than $\delta$, we then have
\begin{align*}
h(X_{t+\delta}) &= h(X_t) + \sqrt{\delta} h'(X_t)\sigma(X_t,t)Z \\
&\quad+ \delta \left(h'(X_t) a(X_t,t) + \frac{1}{2} h''(X_t) \sigma(X_t,t)^2 Z^2 \right).
\end{align*}
We take the expectation of both sides.
Since $Z$ and $X_t$ are independent,
\begin{equation*}
\E[h'(X_t)\sigma(X_t,t)Z] = \E[h'(X_t)\sigma(X_t,t)]\,\E[Z] = 0,
\end{equation*}
so the middle term vanishes.
Using $\E[Z^2] = 1$, we obtain
\begin{multline*}
\E[h(X_{t+\delta})] = \E[h(X_t)] \\
+ \delta \E\Big[h'(X_t) a(X_t,t) + \frac{1}{2} h''(X_t) b(X_t,t)\Big],
\end{multline*}
which shows that
\begin{align*}
\dot E(t) = \E\Big[h'(X_t) a(X_t,t) + \frac{1}{2} h''(X_t) b(X_t,t)\Big].
\end{align*}
Using integration by parts twice, this implies that
\begin{align}
&\dot E(t) = \int_\R p(x,t) \left( h'(x) a(x,t) + \frac{1}{2} h''(x) b(x,t) \right) dx \notag \\
& \!\!\! = \int_\R h(x) \Big(\!-\part{}{x}(a(x,t)p(x,t)) + \frac{1}{2} \part{^2}{x^2} (b(x,t)p(x,t))\Big) dx,
\label{Eq:dot2}
\end{align}
where all boundary terms vanish, since $h$ has compact support.
Comparing equations~\eqref{Eq:dot1} and~\eqref{Eq:dot2}, and using the fact that $h$ is arbitrary, we conclude that
\begin{align*}
\part{p(x,t)}{t} = -\part{}{x}(a(x,t)p(x,t)) + \frac{1}{2} \part{^2}{x^2} (b(x,t)p(x,t)),
\end{align*}
which is the Fokker-Planck equation~\eqref{Eq:FP}.
\qed

Observe that the Fokker-Planck equation~\eqref{Eq:FP} conserves mass.
If $p(x,t)$ satisfies~\eqref{Eq:FP}, then
\begin{align*}
&\frac{d}{dt} \int_\R p(x,t) dx = \int_\R \frac{\partial p(x,t)}{\partial t} dx \\
&~= -\int_\R \part{}{x}(a(x,t)p(x,t)) dx + \frac{1}{2} \int_\R \part{^2}{x^2} (b(x,t)p(x,t)) dx \\
&~= 0 + 0 = 0
\end{align*}
where each integral above is zero by integration by parts.
Thus, if we start with $p(x,0)$ which is a probability density, then the solution $p(x,t)$ stays a probability density at each time $t \ge 0$.

\subsection{Proof of Lemma~\ref{Lem:FishBreg}}
\label{AppFishBreg}

We first compute the gradient $\nabla J_b(p)$ with respect to its argument $p$, which is a probability density function $p \colon \R \to \R$ representing the distribution of a random variable. Recall that $\nabla J_b(p)$ satisfies
\begin{align}
\label{EqnPie}
\langle \nabla J_b(p), \, v \rangle = \lim_{\delta \to 0} \frac{1}{\delta}(J_b(p+\delta v) - J_b(p)),
\end{align}
for all $v \colon \R \to \R$ with $\int_\R v(x) dx = 0$ (so $q+\delta v$ is still a probability density), where $\langle f,g \rangle = \int_\R f(x) g(x) dx$.

For ease of notation, we suppress dependence on $x$ in what follows, and use $'$ to denote the derivative.
We also assume that $p$ is smooth and rapidly decreasing, so all boundary terms are zero in the application of integration by parts. 
We first have the following lemma:

\begin{lemma}\label{Lem:FishBregPf}
$\nabla J_b(p) = b \frac{p'^2}{p^2} - 2\frac{(bp')'}{p}$.
\end{lemma}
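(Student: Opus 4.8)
The plan is to compute the first variation of $J_b$ directly from its definition~\eqref{eq: fishinf} and read off the gradient via the defining relation~\eqref{EqnPie}. Writing $J_b(p) = \int_\R b \frac{p'^2}{p}\, dx$ (suppressing $x$ as stated), I would substitute $p + \delta v$ and expand the integrand to first order in $\delta$. Using $(p'+\delta v')^2 = p'^2 + 2\delta p' v' + O(\delta^2)$ and $\frac{1}{p+\delta v} = \frac{1}{p} - \delta \frac{v}{p^2} + O(\delta^2)$, the coefficient of $\delta$ in the integrand is $\frac{2 b p' v'}{p} - \frac{b p'^2 v}{p^2}$, so that
\begin{equation*}
\lim_{\delta \to 0} \frac{1}{\delta}\bigl(J_b(p + \delta v) - J_b(p)\bigr) = \int_\R \left(\frac{2 b p'}{p}\, v' - \frac{b p'^2}{p^2}\, v\right) dx.
\end{equation*}

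The key step is to convert the term involving $v'$ into one involving $v$, so that the whole expression takes the form $\langle g, v\rangle = \int_\R g\, v\, dx$ and $g$ may be identified with $\nabla J_b(p)$. Integrating the first term by parts (with boundary terms vanishing by the rapid-decrease assumption) gives $-\int_\R \bigl(\frac{2 b p'}{p}\bigr)' v\, dx$, and by the quotient rule $\bigl(\frac{2bp'}{p}\bigr)' = \frac{2(bp')'}{p} - \frac{2 b p'^2}{p^2}$. Collecting the resulting terms, the coefficient multiplying $v$ becomes $b\frac{p'^2}{p^2} - 2\frac{(bp')'}{p}$, which is exactly the claimed expression.

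The one point requiring care is that the gradient defined by~\eqref{EqnPie} is a priori determined only up to an additive constant, since the admissible directions $v$ satisfy $\int_\R v\, dx = 0$ and any constant shift of $g$ integrates to zero against such $v$. The formula asserted in the lemma is the natural representative produced by the variational computation; in fact the computation above never invokes the mean-zero constraint, so it identifies this representative unambiguously as the pointwise coefficient of $v$. Together with the single integration by parts and the quotient-rule differentiation, this completes the proof. I expect no genuine obstacle here beyond careful bookkeeping of the expansion and the boundary-term justification.
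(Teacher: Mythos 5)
Your proof is correct and takes essentially the same route as the paper's: compute the first variation of $J_b$ under a perturbation $p + \delta v$, integrate by parts once to move the derivative off $v$, and read off the coefficient of $v$ as $\nabla J_b(p)$. The only cosmetic difference is that the paper parametrizes the perturbation multiplicatively as $v = ph$ and expands $\log(p+\delta v) = \log p + \delta h + o(\delta)$, whereas you expand the rational integrand $b\,p'^2/p$ directly; both computations coincide, and your remark that the gradient is a priori determined only up to an additive constant (harmless here, since it is always paired with mean-zero directions such as $p-q$) is a fair point the paper leaves implicit.
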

\begin{proof}
Let $h \colon \R \to \R$ with $\int_\R p(x)h(x) dx = 0$, and let $\delta > 0$ be small.
Let $v = ph$, so
\begin{equation*}
\log (p+\delta v) = \log p + \log (1+\delta h) = \log p + \delta h + o(\delta).
\end{equation*}
Squaring the derivative yields
\begin{align*}
(\log (p+\delta v))'^2 = (\log p)'^2 + 2\delta h' (\log p)' + o(\delta).
\end{align*}
We now multiply the above equation by $b(p+\delta v) = bp + \delta bph$ and integrate over $x$, to obtain
\begin{align*}
& J_b(p+\delta v) = J_b(p) \! + \! \delta \int (2 bph'(\log p)' + bph (\log p)'^2) \! + \! o(\delta) \\
& \qquad = J_b(p) + \delta \int h (-2(bp(\log p)')' + bp(\log p)'^2) + o(\delta),
\end{align*}
where we have used integration by parts in the second equality. Since $v = ph$ and $h$ is arbitrary, equation~\eqref{EqnPie} implies that
\begin{align*}
\nabla J_b(p) & = \frac{-2(bp(\log p)')' + bp(\log p)'^2}{p} \\
&= \frac{-2(bp')'}{p} + b\left(\frac{p'}{p}\right)^2,
\end{align*}
as desired.
\end{proof}

Lemma~\ref{Lem:FishBreg} follows from Lemma~\ref{Lem:FishBregPf} by a direct computation.
The Bregman divergence of $J_b$ from $q$ to $p$ is 
\begin{align*}
D_{J_b}(p,q) &= J_b(p) - J_b(q) - \langle \nabla J_b(q), p-q \rangle \\
&= \int b \frac{p'^2}{p} - \int b \frac{q'^2}{q} - \int (p-q)\left(b \frac{q'^2}{q^2} - 2\frac{(bq')'}{q}\right) \\
&= \int b \frac{p'^2}{p} - \int bp \frac{q'^2}{q^2}  + 2\int \frac{p}{q} (bq')' - 2\int (bq')'.
\end{align*}
The last term above is zero (assuming $q$ is also smooth and rapidly decreasing) because it is the integral of a boundary term.
By integration by parts, the third term is
$2\int \frac{p}{q} (bq')' = -2 \int bq'(\frac{p'}{q} - \frac{pq'}{q^2})$.
Therefore, we have
\begin{align*}
D_{J_b}(p,q) &= \int b \frac{p'^2}{p} - 2\int bp'\frac{q'}{q} + \int bp \frac{q'^2}{q^2} \\
&= \int bp(\log p - \log q)'^2
= J_b(p\,\|\,q),
\end{align*}
as desired.
\qed

\section{Proofs of information-estimation theorems}

In this Appendix, we prove the theorems appearing in Section~\ref{SecInfoEst}, linking information to estimation.

\subsection{Proof of Theorem~\ref{thm: fish}}
\label{AppFish}

By Bayes' rule, we have the identity
\begin{align}\label{Eq:FishPf1}
\part{}{y}\log p_{X|Y}(x|y) = \part{}{y}\log p_{Y|X}(y|x) \!-\! \part{}{y}\log p_Y(y).
\end{align}
Observe that
\begin{align*}
\int p_{X|Y}(x|y) \part{}{y}\log p_{X|Y}(x|y) dx
= \part{}{y} \int p_{X|Y}(x|y) dx
= 0.
\end{align*}
Therefore, equation~\eqref{Eq:FishPf1} implies
\begin{align*}
\int p_{X|Y}(x|y)  \part{}{y}\log p_{Y|X}(y|x) \,dx = \part{}{y}\log p_Y(y).
\end{align*}
In particular, we have the following mmse-like relation:
\begin{align*}
&\Phi(X\,|\,Y=y) = \int p_{X|Y}(x|y) \Big(\part{}{y}\log p_{X|Y}(x|y)\Big)^2 dx \\
&=  \int p_{X|Y}(x|y) \Big(\part{}{y}\log p_{Y|X}(y|x)\Big)^2 dx 
- \Big(\part{}{y}\log p_{Y}(y)\Big)^2.
\end{align*}
Multiplying by $b(y)$ and integrating with respect to $p_Y(y)$, we obtain
\begin{align*}
\Phi(X\,|\,Y) &= \int b(y)\, p_Y(y) \,\Phi(X\,|\,Y\!=\!y)\,dy \\
&= \int\int b(y) \, p_{XY}(x,y) \Big(\part{}{y}\log p_{Y|X}(y|x)\Big)^2 dx \, dy \\
& ~~~~~~- \int b(y)\, p_Y(y) \Big(\part{}{y}\log p_{Y}(y)\Big)^2 dy \\
&= \int p_X(x) \, J_b(Y\,|\,X\!=\!x)\,dx - J_b(Y) \\
&= J_b(X;Y),
\end{align*}
as desired. \qed

\subsection{Proof of Theorem~\ref{thm: vantrees}}
\label{AppVanTrees}

We follow the outline of Gill and Levit~\cite[$\S2$]{GilLev95}.
For every $y$, we have
\begin{align*}
\int \part{}{x} p_{XY}(x,y) \, dx = p_{XY}(x,y)\Big|_{x=-\infty}^{x=+\infty} = 0.
\end{align*}
Moreover, by integration by parts,
\begin{align*}
\int x \, \part{}{x} p_{XY}(x,y) \, dx = -\int p_{XY}(x,y) dx = -p_Y(y).
\end{align*}
Therefore,
\begin{align*}
\int\int (T(y)-x) \, \part{}{x} p_{XY}(x,y) \, dx \, dy = 1,
\end{align*}
which we may also write as
\begin{align*}
\E\Big[(T(Y)-X) \, \part{}{x} \log p_{XY}(X,Y)\Big] = 1,
\end{align*}
where $(X,Y) \sim p_{XY}$.
By the Cauchy-Schwarz inequality, we then have
\begin{align}
1 & = \E\left[\frac{(T(Y)-X)}{\sqrt{b(X)}} \, \sqrt{b(X)}\part{}{x} \log p_{XY}(X,Y)\right]^2 \notag \\
&\le \E\!\left[\frac{(T(Y)-X)^2}{b(X)}\!\right]\! \E\!\left[b(X) \!\Big(\part{}{x} \log p_{XY}(X,Y)\!\Big)^2\right]\!.\label{Eq:VanTreesPf1}
\end{align}
Using the fact that
\begin{equation*}
\part{}{x} \log p_{XY}(x,y) = \part{}{x} \log p_{X}(x) + \part{}{x} \log p_{Y|X}(y|x),
\end{equation*}
we also have
\begin{align}
&\E\left[b(X) \Big(\part{}{x} \log p_{XY}(X,Y)\Big)^2\right] \notag \\
& = \E\left[b(X) \Big(\part{}{x} \log p_{X}(X) \Big)^2\right] \notag \\
&~~~+ \E\left[b(X) \Big(\part{}{x} \log p_{Y|X}(Y|X) \Big)^2\right] \notag \\
&~~~+ 2\E\left[b(X)\Big(\part{}{x} \log p_{X}(X) \Big)\Big(\part{}{x} \log p_{Y|X}(Y\,|\,X) \Big)\right] \notag \\
&= J_b(X) + \Phi_b(Y\,|\,X) \label{Eq:VanTreesPf2}
\end{align}
where the last expectation above is zero because it is equal to
\begin{align*}
\int\int &b(x) \part{}{x} p_X(x) \part{}{x} p_{Y|X}(y|x) \,dy\,dx \\
&=\int b(x) \part{}{x} p_X(x) \int \part{}{x} p_{Y|X}(y|x) \,dy \, dx 
= 0,
\end{align*}
since $\int \part{}{x} p_{Y|X}(y|x) dy = \part{}{x} \int p_{Y|X}(y|x) dy = \part{}{x} 1 = 0$.
Combining inequality~\eqref{Eq:VanTreesPf1} with equation~\eqref{Eq:VanTreesPf2} yields the desired result.
\qed

\subsection{Proof of Theorem~\ref{Thm:MMSE}}
\label{AppMMSE}

Since the minimizer in the mmse definition is the conditional expectation, we may write
\begin{align}
&\mmse_b(\varphi(X,Y)\,|\,Y) \notag \\
&~~~= \E[b(Y)(\varphi(X,Y)-\E[\varphi(X,Y)|Y])^2] \notag \\
&~~~= \E[b(Y)\varphi(X,Y)^2] - \E[b(Y)(\E[\varphi(X,Y)\,|\,Y])^2]. \label{Eq:MMSEPf1}
\end{align}
The first term in equation~\eqref{Eq:MMSEPf1} is equal to
\begin{multline*}
\int p_X(x) \int b(y) p_{Y|X}(y|x) \Big(\part{}{y} \log p_{Y|X}(y|x)\Big)^2 dy \, dx \\
= \int p_X(x) J_b(Y\,|\,X\!=\!x)\,dx
= J_b(Y\,|\,X).
\end{multline*}
On the other hand, since $\E[\varphi(X,Y)\,|\,Y] = \part{}{y} \log p_Y(Y)$, 
the second term in equation~\eqref{Eq:MMSEPf1} is equal to
\begin{align*}
\int b(y) p_Y(y) \Big( \part{}{y} \log p_Y(y)\Big)^2 dy = J_b(Y).
\end{align*}
Therefore, equation~\eqref{Eq:MMSEPf1} implies that
\begin{align*}
\mmse_b(\varphi(X,Y)\,|\,Y)
= J_b(Y\,|\,X) - J_b(Y) = J_b(X;Y),
\end{align*}
as desired.
\qed

\section{Generalizations of De Bruijn's identity}

In this Appendix, we prove the theorems in Section~\ref{SecDeBruijn}.

\subsection{Proof of Theorem~\ref{Thm:Ent}}
\label{AppEnt}

We assume that $p = p(x,t)$ is smooth and rapidly decreasing, so the boundary terms become zero when we apply integration by parts, and we may differentiate under the integral sign.

We first write
\begin{align*}
\frac{d}{dt} H(X_t)
  &= -\int \part{p}{t} \log p \, dx - \int p \part{}{t}\log p \, dx.
\end{align*}
The second integral above is equal to $\int \part{p}{t} dx = \part{}{t} \int p \, dx = \part{}{t}1 = 0$.
We then substitute the Fokker-Planck equation~\eqref{Eq:FP} to obtain
\begin{align}\label{Eq:EntCalc1}
\frac{d}{dt} H(X_t)
  &= \int \part{(ap)}{x} \log p \, dx - \frac{1}{2} \int \part{^2(bp)}{x^2} \log p \, dx.
\end{align}
By two applications of integration by parts, the first integral in equation~\eqref{Eq:EntCalc1} is equal to
\begin{align}
\int \part{(ap)}{x} \log p\, dx
  &= - \int ap \part{\log p}{x}  dx 
  = - \int a \part{p}{x} dx \notag \\
  &= \int \part{a}{x} p \, dx
  = \E\left[\part{}{x}a(X_t,t)\right]. \label{Eq:EntCalc2}
\end{align}
For the second integral in equation~\eqref{Eq:EntCalc1}, we will use the identity
\begin{align*}
\part{^2\log p}{x^2} = \frac{1}{p} \part{^2 p}{x^2} - \Big(\part{\log p}{x}\Big)^2.
\end{align*}
By integration by parts, we then have
\begin{align}
\int \part{^2(bp)}{x^2} \log p \, dx
  &= \int bp \part{^2 \log p}{x^2} dx \notag \\
  &= \int b \part{^2 p}{x^2} dx - \int bp  \Big(\part{\log p}{x}\Big)^2 dx \notag \\
  &= \int \part{^2 b}{x^2} p\, dx - \int bp  \Big(\part{\log p}{x}\Big)^2 dx \notag \\
  &= \E\left[\part{^2}{x^2}b(X_t,t)\right] - J_{b_t}(X_t).  \label{Eq:EntCalc3}
\end{align}
Combining equations~\eqref{Eq:EntCalc1},~\eqref{Eq:EntCalc2}, and~\eqref{Eq:EntCalc3} yields the desired conclusion.
\qed

\subsection{Proof of Theorem~\ref{Thm:KL}}
\label{AppKL}

For ease of notation, we write $p \equiv p_t$ and $q \equiv q_t$.
By differentiating under the integral sign and using the chain rule, we have
\begin{align}
\frac{d}{dt} K(p_t \,\|\, q_t)
  &= \frac{d}{dt} \int p \log \frac{p}{q} dx  \notag \\
  &= \int \part{p}{t} \log \frac{p}{q} dx + \!\!\int \!p \part{\log p}{t} dx - \!\!\int \!p \part{\log q}{t} dx \notag \\
  &= \int \part{p}{t} \log \frac{p}{q} dx+ 0 - \int \frac{p}{q} \part{q}{t} dx,  \label{Eq:KLDot1}
\end{align}
where the middle integral above is zero because $\int p \part{\log p}{t} dx = \int \part{p}{t} dx = \part{}{t} \int p \, dx = \part{}{t} 1 = 0$.
We apply the Fokker-Planck equation~\eqref{Eq:FP} and use integration by parts to write the first integral in equation~\eqref{Eq:KLDot1} as
\begin{align}
\int \part{p}{t} \log \frac{p}{q} dx&
  = \int \left(- \part{(ap)}{x} + \frac{1}{2} \part{^2(bp)}{x^2} \right) \log \frac{p}{q} dx \notag \\
  &= \!\!\int \!\!\left(\! ap \part{}{x} \log \frac{p}{q} + \frac{1}{2} bp \part{^2}{x^2} \log \frac{p}{q} \right)\! dx. \label{Eq:KLDot2}
\end{align}
Note that
\begin{align*}
\part{}{x} \log \frac{p}{q} = \frac{q}{p} \part{}{x} \frac{p}{q}
\end{align*}
and
\begin{align*}
\part{^2}{x^2} \log \frac{p}{q} = \frac{q}{p} \part{^2}{x^2} \frac{p}{q} - \left(\part{}{x} \log \frac{p}{q}\right)^2.
\end{align*}
Plugging these relations into equation~\eqref{Eq:KLDot2} and using integration by parts and the Fokker-Planck equation~\eqref{Eq:FP} for $q(x,t)$, we find that $\int \part{p}{t} \log \frac{p}{q} dx$ is equal to
\begin{align*}
\int &\left(aq \part{}{x} \frac{p}{q} + \frac{1}{2} bq \part{^2}{x^2} \frac{p}{q} - \frac{1}{2}bp \left(\part{}{x} \log \frac{p}{q}\right)^2\right) dx \\
  &= \int \left(-\part{(aq)}{x} + \frac{1}{2} \part{^2(bq)}{x^2}\right) \frac{p}{q} dx  - \frac{1}{2} J_{b_t}(p_t\,\|\,q_t) \\
  &= \int \part{q}{t} \frac{p}{q} dx - \frac{1}{2} J_{b_t}(p_t\,\|\,q_t).
\end{align*}
Substituting this into equation~\eqref{Eq:KLDot1}, we conclude that
\begin{align*}
\frac{d}{dt} K(p_t\,\|\,q_t) = - \frac{1}{2} J_{b_t}(p_t \,\|\, q_t),
\end{align*}
as desired.
\qed

\subsection{Proof of Theorem~\ref{Thm:I}}
\label{AppI}

Theorem~\ref{Thm:Ent} holds for any initial distribution.
In particular, if $X_0$ is a point mass at $x_0 \in \R$, we have
\begin{multline}
\label{EqnDonut}
\frac{d}{dt} H(X_t\,|\,X_0=x_0) = \frac{1}{2} J_{b_t}(X_t \,|\, X_0 = x_0) \\
+ \E\left[\part{}{x}a(X_t,t) - \frac{1}{2} \part{^2}{x^2} b(X_t,t)\,\Big|\, X_0=x_0\right].
\end{multline}
For any initial distribution $X_0$, we may average equation~\eqref{EqnDonut} over $X_0$ to obtain
\begin{align*}
\frac{d}{dt} H(X_t\,|\,X_0) &= \frac{1}{2} J_{b_t}(X_t \,|\, X_0) \\
&~~+ \E\left[\part{}{x}a(X_t,t) - \frac{1}{2} \part{^2}{x^2} b(X_t,t)\right].
\end{align*}
Combining this with Theorem~\ref{Thm:Ent}, we conclude that
\begin{align*}
\frac{d}{dt} I(X_0;X_t) &= \frac{d}{dt} (H(X_t)-H(X_t\,|\,X_0)) \\
&= \frac{1}{2} (J_{b_t}(X_t) - J_{b_t}(X_t\,|\,X_0)) \\
&= -\frac{1}{2} J_{b_t}(X_0;X_t),
\end{align*}
as desired.
\qed

\section{Details for Section~\ref{subsec: special}}
\label{AppSpecial}

For the Ornstein-Uhlenbeck process, the explicit solution to the Fokker-Planck equation~\eqref{Eq:FPOU} is given by
\begin{align*}
p(x,t) = \sqrt{\frac{\alpha}{\pi (1-e^{-2\alpha t})}} \exp\left(-\frac{\alpha(x-e^{-\alpha t} x_0)^2}{1-e^{-2\alpha t}} \right).
\end{align*}

For the geometric Brownian motion process, the explicit solution to the Fokker-Planck equation~\eqref{Eq:FPOU} is given by
\begin{align*}
p(x,t) = \frac{1}{\sqrt{2\pi \sigma^2 t}} \frac{1}{x}\exp \left(- \frac{(\log x - \log x_0 - (\mu-\frac{\sigma^2}{2}) t )^2}{2\sigma^2 t}\right).
\end{align*}

The mmse expressions in for the examples in Section~\ref{subsec: special} are then obtained through direct computations.

\section{Multivariate setting}
\label{AppMultivariate}

In the multivariate setting, our channel definition still takes the form of equation~\eqref{Eq:SDE}, where $a(x,t) \in \R^d$ is a drift vector, $\sigma(x,t) \in \R^{d \times d}$ is a covariance matrix, and the weight matrix $b(x,t) = \sigma(x,t) \sigma(x,t)^\top$ is assumed to be uniformly positive definite.
The Fokker-Planck equation~\eqref{Eq:FP} now takes the form
\begin{align*}
\frac{\partial p(x,t)}{\partial t} = -\nabla \cdot (a(x,t) p(x,t)) + \frac{1}{2} \nabla \cdot (\nabla \cdot (b(x,t) p(x,t))),
\end{align*}
where $\nabla$ is the gradient and $\nabla \cdot$ is the divergence operator.
More explicitly, we have
\begin{align*}
\frac{\partial p(x,t)}{\partial t} = &-\sum_{i=1}^d \frac{\partial}{\partial x_i} (a_i(x,t) p(x,t)) \\
&+ \frac{1}{2} \sum_{i=1}^d\sum_{j=1}^d \frac{\partial^2}{\partial x_i \partial x_j} (b_{ij}(x,t) p(x,t)).
\end{align*}
The derivation for the Fokker-Planck equation in the multivariate setting is analogous to the derivation in the univariate setting (cf.\ Mackey~\cite{Mac92}).

As explained in Section~\ref{Sec:Mult}, the generalized Fisher information is defined using the Mahalanobis inner product with respect to the weight matrix $b$, and the statistical Fisher information is defined by averaging the pointwise Fisher information with respect to the weight matrix $b$.
The weighted van Trees' inequality in Theorem~\ref{thm: vantrees} becomes
\begin{align*}
\E\left[\|T(Y)-X\|^2_{b(X)^{-1}}\right] \ge \frac{1}{\Phi_b(Y|X)+J_b(X)} = \frac{1}{J_b(X|Y)},
\end{align*}
where $\|v\|^2_{b(x)^{-1}} = v^\top b(x)^{-1} v = \mathrm{Tr}(b(x)^{-1} vv^\top)$ is the Mahalanobis inner product with respect to the inverse weight matrix $b^{-1}$.
The proof proceeds in the same way as in the univariate case (see also Gill and Levit~\cite[$\S4$]{GilLev95}).

The expression for the time derivative of entropy in Theorem~\ref{Thm:Ent} now becomes
\begin{align*}
\frac{d}{dt} H(X_t) = \frac{1}{2} J_{b_t}(X_t) \!+\! \E\left[\nabla\! \cdot a(X_t,t) - \frac{1}{2} \nabla \!\cdot (\nabla\! \cdot b(X_t,t))\right],
\end{align*}
or more explicitly,
\begin{align*}
\frac{d}{dt} H(X_t) = \frac{1}{2} J_{b_t}(X_t) &+ \sum_{i=1}^d \E\left[\part{a_i(X_t,t)}{x_i}\right] \\
&- \frac{1}{2} \sum_{i=1}^d \sum_{j=1}^d \E\left[\part{^2 b_{ij}(X_t,t)}{x_i \partial x_j}\right].
\end{align*}
The statements of the other theorems remain unchanged and the proofs remain valid.

% that's all folks
\end{document}